\journal{}
\begin{document}

\begin{frontmatter}


\title{Prescribed Time Time-varying Output Formation Tracking for Uncertain Heterogeneous Multi-agent Systems}


\author[author1]{Binghe An}
\author[author1]{Bo Wang\corref{cor1}}
\author[author1]{Huijin Fan\corref{cor1}}
\cortext[cor1]{Corresponding author}


\author[author1]{Lei Liu}
\author[author1]{Yongji Wang}
\address{National Key Laboratory of Science and Technology on Multispectral Information Processing, School of Artificial Intelligence and Automation,
Huazhong University of Science and Technology, Wuhan 430074, PR China}

\newtheorem{theorem}{Theorem}
\newtheorem{assumption}{Assumption}
\newtheorem{lemma}{Lemma}
\newtheorem{remark}{Remark}
\newtheorem{definition}{Definition}
\allowdisplaybreaks[4]
\begin{abstract}
The time-varying output formation tracking for the heterogeneous multi-agent systems (MAS) is investigated in this paper. First, a distributed observer is constructed for followers to estimate the states of the leader, which can ensure that the estimation error converges to the origin in the prescribed time. Then, the local formation controller is designed for each follower based on the estimation of the observer, under which, the formation errors converge to the origin in the prescribed time as well. That is, the settling time of the whole system can be predefined in advance. It's noted that not only the uncertainties in the state matrix but also the uncertainties in the input matrix are considered, which makes the problem more practical. Last, a simulation is performed to show the effectiveness of the proposed approach.
\end{abstract}



\begin{keyword}
Fixed-wing UAVs \sep formation tracking \sep velocity constraint \sep  fixed time distributed observer \sep R-CBF.


\end{keyword}
\end{frontmatter}



\section{Introduction}
Due to its wide applications in different fields, such as unmanned aerial vehicles \cite{Bhowmick2022}, surface vehicles \cite{Ringback2022}, and spacecraft \cite{Wei2022}, formation control of MAS has been a hot research topic in recent years. It's noted that in some scenes, the networked agents may have different dynamics. Therefore, some significant research has tried to address the cooperative formation control of the heterogeneous MAS with the state feedback \cite{Yan2021, Shi2022} and the output feedback \cite{HuaY2019, HuaY2020, Song2022}.\par
It's worth mentioning that most existing results on formation control of heterogeneous MAS pay attention to achieving the desired formation configuration when the time tends to infinity. However, in the practical applications that have to be completed in finite time, such as the multi-missile guidance, the asymptotical convergence can not meet the requirement of a fast convergence rate. Hence, the finite time cooperative output regulation has been investigated to speed up the convergence rate \cite{Wu2022}, which can guarantee that the tracking error converges to the origin in a finite time related to the initial states. Further, to remove the dependence of the settling time on the initial states, the fixed time control strategies have been proposed to stabilize the error system, which allows that the settling time is independent of the initial states \cite{Cai2021}. Although the upper bound of the settling time of the fixed time control approaches can be explicitly given, it depends on multiple parameters in the controller thus inconvenient to be predefined in advance. Recently, the prescribed time control with a time-varying gain has received increasing attention own to its benefit on the convergence time \cite{Wang2019}, which can be prespecified as desired by only one parameter in the controller. \par
On the other hand, there are usually unmodeled dynamics and uncertain parameters in the practical system. The uncertainties may damage the performance of the controlled system and even destabilize the system. Although the formation control for heterogeneous MAS was discussed in \cite{HuaY2019, Shi2022, Song2022, Cai2021}, no uncertainties in the systems were considered. The formation control of heterogeneous MAS with uncertainties was the state matrix is studied in \cite{Shi2022}, while the uncertainties in the input matrix are ignored. To be more practical, it's meaningful and significant to develop new methods that can realize the formation tracking for the heterogeneous MAS with uncertainties in the both state and input matrices while ensuring that the formation errors converge in a prescribed time.\par
Based on the above discussions, the time-varying output formation tracking of uncertain heterogeneous MAS is investigated in this paper. A prescribed time control strategy composed of a distributed observer and a local formation controller is constructed, which guarantees that the output of the followers can track that of the leader and the formation errors converge to the origin in the prescribed time despite the external disturbances and uncertainties. The contributions of this paper can be summarized as follows:\\
(1) A distributed observer is constructed to estimate the states of the leader, by which, the estimation errors converge to the origin in finite time despite the leader's unknown input and the convergence time can be specified in advance by only one parameter in the observer.\\
(2) A prescribed time formation controller is proposed for each follower with the help of a time scaling function such the formation errors can disappear after a given time. Moreover, robust terms are designed in the controller to suppress the effect of external disturbances and the unknown input of the leader.\\
(3) The uncertainties in the both state matrix and input matrix are considered, which makes the problem more practical while challenging. Compensation items to deal with uncertainties are constructed in the controller to ensure the stability of the controlled system.\par
The content of this paper is organized as follows. Preliminaries and the problem formulation are given in Section 2. Then, the prescribed time formation control strategy is described in detail in Section 3. Simulation is given in Section 4 to demonstrate the effectiveness of the proposed control approach. Section 5 is reserved for the conclusion.\\
\textbf{Notations}. In what follows, ${R^n}$ means the set of $n$-dimensional real vectors. $\left\|  \cdot  \right\|$ represents the 2-norm of a matrix. ${\lambda _{\min }}( \cdot )$  and ${\lambda _{\max }}( \cdot )$  denote the minimum and maximum eigenvalues of a matrix, respectively. $diag\{ {a_i}\} $  represents a diagonal matrix with $a_i$  as its diagonal element. For $x \in R^n$, $\text{sgn}(x)=\frac{x}{\left\|  x \right\|}$.
\section{Preliminaries and Problem Formulation}
\subsection{Graph Theory}
The considered multi-agent systems consist of one leader and $N$ followers. The leader acquires no information from followers and is marked as agent 0. The interaction network among followers is described by a graph $G(S,E)$, where $S$ represents the set of followers and $E$ denotes the set of edges in the graph  $G(S,E)$. The $A = [{a_{ij}}] \in {R^{N \times N}}$  and $L = [{l_{ij}}] \in {R^{N \times N}}$  are the adjacency matrix and Laplacian matrix of  $G$, respectively. The detailed definitions of  $A$ and  $L$ can be seen in \cite{An2022}, thus omitted here. $B = diag\{ {b_i}\}  \in {R^{N \times N}}$ is a diagonal matrix with  $b_i\;(i=1,...,N)$ as its diagonal elements. $b_i=1$ if follower $i$ has access to the {leader's} information and $b_i=0$ if not. The matrix $H=L+B$ is utilized to describe the communication network of the whole leader-follower multi-agent system.

\begin{assumption}
The communication among followers is undirected and there exists at least one path from the leader to each follower.
\end{assumption}
\subsection{Problem Formulation}
The dynamics of follower $i\;(i=1,...,N)$ is described by
\begin{align}\label{model_follower}
{{\dot x}_i}& = ({A_i} + \Delta {A_i}){x_i} + ({B_i} + \Delta {B_i})({u_i} + {d_i}),\notag \\
{y_i} &= {C_i}{x_i},
\end{align}
where ${x_i} \in {R^{{n_i}}}$ and ${y_i} \in {R^{{q}}}$ are the state and the output of follower $i$,  respectively. ${A_i} \in R^{n_i \times n_i}$ and ${B_i}\ \in R^{n_i \times m_i}$ are constant matrics.  ${d_i}$ represents the external disturbances and ${u_i}$  is the control input. $\Delta {A_i} \in R^{n_i \times n_i}$ and $\Delta {B_i}\ \in R^{n_i \times m_i}$ denote the unknown, time-varying uncertainties in the state matrix and input matrix, respectively. It is assumed that $\Delta {A_i}$ and $\Delta {B_i}$ satisfy the following condition:
\begin{align}
\Delta {A_i} = {B_i}{N_i}(t), \, \Delta {B_i} = {B_i}{M_i}(t),
\end{align}
where ${N_i}(t) \in R^{m_i \times n_i}$ is an unknown, time-varying while norm-bounded matrix, that is, there exists a known constant $\bar N$ satisfying $\left\| {{N_i}(t)} \right\| \le \bar N$. ${M_i(t)} =diag\{{w_1(t), ...,w_{m_i}(t)}\} \in R^{m_i \times m_i}$ is an unknown diagonal matrix and the absolute values of whose diagonal elements are less than $\bar M$  with $ 0\leq \bar M  <1$, that is $|w_i(t)|<\bar M<1\,(i=1,...,m_i)$.\par
The dynamics of the leader is described by
\begin{align}\label{leader_model}
{{\dot x}_0} & = {A_0}{x_0} + {B_0}{u_0},\notag \\
{y_0} &  = {C_0}{x_0},
\end{align}
where ${x_0} \in {R^{{n_0}}}$ and ${y_0} \in {R^{{q}}}$ are the state and the output of the leader. ${A_0} \in R^{n_0 \times n_0}$ and ${B_0}\ \in R^{n_0 \times m_0}$. ${u_0}$ is the control input of the leader, which is unknown to all followers.
\begin{assumption}
The disturbances $d_i\;(i=1,...,N)$ and the input of the leader $u_0$ are bounded, i.e., there are positive constants  $\bar d$ and $\bar u$ such that $\left\| {{u_0}} \right\| \le \bar u$ and $\left\| {{d_i}} \right\| \le \bar d$  hold.
\end{assumption}

\begin{assumption}
The matrix  $B_i \, (i=1 ,..., N)$ is row-full rank.
\end{assumption}

\begin{assumption}
For all $ i=1,..., N$, the following regulation equation has solution pairs $({X_i},{\rm{ }}{U_i})$.
\begin{align}
{X_i}{A_0} &= {A_i}{X_i} + {B_i}{U_i},\notag \\
{C_i}{X_i} & = {C_0}.
\end{align}
\end{assumption}
\subsection{Related Lemmas}

\begin{lemma}\cite{Wang2019, Xu2021} \label{prescribed_time}
Considering the following system
\begin{align}\label{general_system}
\dot x(t) = g(t,x(t)),\;\;\;{\kern 1pt}\;{\kern 1pt} x(0) = {x_0}
\end{align}
where $x(t)$ is the state vector and $g(t,x(t))$ is a nonlinear vector field locally bounded uniformly in time. Suppose there is a Lyapunov function $V(t)$ such that
\begin{align}
\frac{{dV(x(t))}}{{dt}} \le  - aV - b\frac{{\dot \mu (t)}}{{\mu (t)}}V(x(t)),
\end{align}
where $a$ and $b$ are positive constants and $\mu (t)$ is a time scaling function defined as
\begin{align}
{\mu }(t) = \left\{ {\begin{array}{*{20}{l}}
{\frac{{T^v}}{{{{\left( {{T} + {t_0} - t} \right)}^v}}},}&{t \in \left[ {{t_0},{T} + {t_0}} \right)}\\
{1,}&{t \in \left[ {{T} + {t_0},\infty } \right)}
\end{array}} \right.
\end{align}
where $v>2$ is a constant, ${t_0}$ is the initial time and $T>t_0$. Then, the origin of the system (\ref{general_system}) is globally prescribed-time stable with the prescribed time $T$.
\end{lemma}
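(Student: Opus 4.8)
The plan is to kill the singular time-varying term by an integrating factor and reduce everything to ordinary exponential decay. First I would introduce the auxiliary function $W(t)=\mu^{b}(t)\,V(x(t))$ on $[t_0,T+t_0)$. Since $\mu$ is $C^1$ and strictly positive there, $W$ is differentiable on that interval and, using the hypothesis of the lemma in the inequality,
\begin{align}
\dot W(t) &= b\,\mu^{b-1}(t)\dot\mu(t)V(x(t))+\mu^{b}(t)\dot V(x(t))\notag\\
&\le b\,\mu^{b-1}(t)\dot\mu(t)V+\mu^{b}(t)\Big(-aV-b\tfrac{\dot\mu(t)}{\mu(t)}V\Big)=-a\,\mu^{b}(t)V=-a\,W(t).
\end{align}
By the comparison lemma this gives $W(t)\le W(t_0)e^{-a(t-t_0)}$ on $[t_0,T+t_0)$, and because $\mu(t_0)=1$ we have $W(t_0)=V(x(t_0))$.

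Next I would translate this back into a bound on $V$. On $[t_0,T+t_0)$ one has $\mu^{b}(t)=T^{vb}/(T+t_0-t)^{vb}$, so
\begin{align}
V(x(t))\le V(x(t_0))\,e^{-a(t-t_0)}\,\frac{(T+t_0-t)^{vb}}{T^{vb}}\le V(x(t_0)),
\end{align}
since both $e^{-a(t-t_0)}\le 1$ and $(T+t_0-t)^{vb}\le T^{vb}$ there. The second inequality already yields boundedness of $V$, hence (with $V$ positive definite and radially unbounded) of $x$, which rules out finite escape time and guarantees the solution is defined on all of $[t_0,T+t_0)$; it also gives Lyapunov stability. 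Letting $t\to(T+t_0)^{-}$ in the first inequality, the factor $(T+t_0-t)^{vb}\to 0$, so $\lim_{t\to(T+t_0)^{-}}V(x(t))=0$.

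Finally I would close the argument on $[T+t_0,\infty)$. By continuity of $V\circ x$, the limit above forces $V(x(T+t_0))=0$; for $t\ge T+t_0$ we have $\mu(t)\equiv 1$, so the differential inequality degenerates to $\dot V\le -aV\le 0$, and combined with $V\ge 0$ this forces $V(x(t))=0$, i.e. $x(t)=0$, for all $t\ge T+t_0$. Since the estimates hold for arbitrary $x_0$, the origin is globally prescribed-time stable with settling time $T$ (measured from $t_0$). The one delicate point is the behaviour as $t\to T+t_0$, where $\mu$ and $\dot\mu$ both blow up: the integrating factor $\mu^{b}$ is exactly what absorbs the singular term $-b(\dot\mu/\mu)V$ and converts the growth of $\mu$ into the vanishing factor $(T+t_0-t)^{vb}$, so no separate escape-time or fixed-point estimate is needed beyond the uniform bound $V\le V(x(t_0))$. (The condition $v>2$ is not used in this Lyapunov step; it is imposed so that the resulting closed-loop vector field $g(t,x)$ remains locally bounded uniformly in time, as required for existence of solutions and applicability of the comparison lemma.)
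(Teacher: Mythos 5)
Your integrating-factor argument is correct: setting $W=\mu^{b}V$ cancels the singular term exactly, the comparison lemma gives $W(t)\le W(t_0)e^{-a(t-t_0)}$, and unwinding yields $V(x(t))\le V(x(t_0))e^{-a(t-t_0)}(T+t_0-t)^{vb}/T^{vb}$, which forces $V\to 0$ as $t\to (T+t_0)^{-}$ while keeping $V$ (hence $x$) uniformly bounded; the continuation onto $[T+t_0,\infty)$ via $\dot V\le -aV$ and $V\ge 0$ is also sound. Note that the paper does not prove this lemma at all --- it is imported verbatim from the cited references --- and your proof is essentially the standard argument used there, including the correct observation that $v>2$ plays no role in the Lyapunov step itself but only in keeping the closed-loop gain $\dot\mu/\mu$ times the (vanishing) state bounded.
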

\begin{lemma}
 \cite{AnB2022} Under Assumption 2, $H$ is a symmetric and positive definite matrix.
\end{lemma}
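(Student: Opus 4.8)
The plan is to verify symmetry and positive definiteness separately, using only the structure $H = L + B$ together with the standing assumption that the follower graph is undirected and that there is a path from the leader to every follower. Symmetry is immediate: since the follower communication is undirected, the adjacency matrix $A$ is symmetric, hence the Laplacian $L = \mathrm{diag}\{\sum_{j} a_{ij}\} - A$ is symmetric; and since $B = \mathrm{diag}\{b_i\}$ is diagonal, $H = L + B$ is symmetric.

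For definiteness I would start from the standard quadratic-form identities $x^{\top} L x = \frac{1}{2}\sum_{i,j} a_{ij}(x_i - x_j)^2 \ge 0$ and $x^{\top} B x = \sum_{i} b_i x_i^2 \ge 0$, which together give $x^{\top} H x \ge 0$ for all $x$, i.e.\ $H$ is positive semidefinite. It then remains to show that $x^{\top} H x = 0$ forces $x = 0$. If $x^{\top} H x = 0$, then since both summands are nonnegative we must have $x^{\top} L x = 0$ and $x^{\top} B x = 0$ simultaneously: the first forces $x_i = x_j$ along every follower edge, so $x$ is constant on each connected component of the follower graph, while the second forces $x_i = 0$ whenever $b_i = 1$.

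The key step, and the only place the leader-to-follower connectivity is really used, is to observe that every connected component of the follower subgraph contains a node $i$ with $b_i = 1$: given any follower, a path from the leader to it has its first edge entering some follower $j$, so $b_j = 1$ by definition of $B$, and the remainder of that path lies in the follower graph, placing $j$ in the same component as the chosen follower. Combining this with the two conclusions above, on each component $x$ is constant and vanishes at such a node, hence $x \equiv 0$ there, so $x = 0$ overall; therefore $x^{\top} H x > 0$ for all $x \ne 0$ and $H$ is positive definite. I do not anticipate a genuine obstacle here, since the statement is classical; the only point requiring a little care is this component-pinning argument, because the connectivity hypothesis is phrased on the full leader-follower graph whereas the Laplacian rigidity argument lives on the follower subgraph alone.
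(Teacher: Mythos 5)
Your proof is correct and complete. Note that the paper itself gives no proof of this lemma at all: it is stated with a citation to an external reference, so there is nothing internal to compare against. Your argument is the standard one and each step holds up: symmetry of $H=L+B$ from undirectedness of the follower graph plus diagonality of $B$; positive semidefiniteness from $x^{\top}Lx=\tfrac{1}{2}\sum_{i,j}a_{ij}(x_i-x_j)^2\ge 0$ and $x^{\top}Bx=\sum_i b_i x_i^2\ge 0$; and the component-pinning step, which is indeed the only place the leader-connectivity enters, is handled correctly --- the first edge of a leader-to-follower path identifies a follower $j$ with $b_j=1$ in the same follower-graph component as the target follower, so the locally constant vector must vanish on every component. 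One small editorial remark: the lemma as printed invokes ``Assumption 2'' (boundedness of $d_i$ and $u_0$), but the hypothesis actually needed is Assumption 1 (undirected follower communication and a path from the leader to every follower), which is exactly what your proof uses; the reference in the lemma statement appears to be a typo.
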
\par

The desired time-varying output formation pattern of follower $i$ is described by a time-varying vector $h_i(t)$. The control objective is to design a distributed formation controller $u_i$ such that the following requirements can be satisfied
\begin{align*}
\mathop {\lim }\limits_{t \to \bar T} \left\| {{y_i} - {y_0} - {h_i}} \right\| = 0, \; \\
 \left\| {{y_i} - {y_0} - {h_i}} \right\| = 0, \;\text{for} \;  t>\bar T
\end{align*}
where $\bar T$ is the prescribed time, which can be specified by one parameter in the controller.\par

\section{Main Result}
In this section, the distributed formation strategy is derived in detail. First, a prescribed time observer using neighbors' information is constructed for followers, which can provide an accurate estimation of leader's states. Then, the formation controller is designed to realize the desired formation pattern in the prescribed time despite the uncertainties and disturbances.
\subsection{The Design of the Prescribed Time Observer}
For the follower $i\;(i=1,...,N)$, the prescribed time observer is constructed as follows
\begin{align}\label{observer}
\dot{\xi _i} = &{A_0}{{ \xi }_i} - \frac{{{c_i}}}{2}{B_0}B_0^T{P_0}\Big(\sum\limits_{j = 1}^N {{a_{ij}}} {\rm{(}}{\xi _i} - {\xi _j}{\rm{)}} + {b_i}({\xi _i} - {x_0})\Big) \notag \\
&- {\beta _i}\frac{{{{\dot \mu_1(t) }}}}{{{\mu _1(t)}}}\Big(\sum\limits_{j = 1}^N {{a_{ij}}} {\rm{(}}{\xi _i} - {\xi _j}{\rm{)}} + {b_i}({\xi _i} - {x_0})\Big)\\ \notag
&- {\alpha _i}{\mathop{\rm sgn}} \Big({P_0}\Big(\sum\limits_{j = 1}^N {{a_{ij}}} {\rm{(}}{\xi _i} - {\xi _j}{\rm{)}} + {b_i}({\xi _i} - {x_0})\Big)\Big),
\end{align}
where ${c_i}$ , $\alpha_i$ and $\beta_i$ are positive constants. $P_0$ is the solution of the following Riccati equation
\begin{align}\label{Riccati_P0}
A_0^T{P_0} + {P_0}{A_0} - {P_0}{B_0}{B_0}^T{P_0} + {I_{{n_0}}} = 0.
\end{align}
$\mu_1(t)$ is a time-scaling function defined as
\begin{align}
{\mu _1}(t) = \left\{ {\begin{array}{*{20}{l}}
{\frac{{T_1^v}}{{{{\left( {{T_1} + {t_0} - t} \right)}^v}}},}&{t \in \left[ {{t_0},{T_1} + {t_0}} \right)}\\
{1,}&{t \in \left[ {{T_1} + {t_0},\infty } \right)}
\end{array}} \right.
\end{align}
where $T_1$ is the prescribed convergence time for the observer (\ref{observer}).\par
Define the estimation error $\tilde \xi _i$ as ${\tilde \xi _i} = {\xi _i} - {x_0}$, then, the following theorem is given to show the convergence of the observer (\ref{observer}).
\newtheorem{thm}{\bf Theorem}
\begin{theorem}\label{thm1}
Suppose the Assumptions 1-3 hold and the parameters are chosen to satisfy
\begin{align}\label{parameter_controller}
{c_i} \ge \frac{1}{{{\lambda _{\min }}(H)}},\;{\alpha _i} \ge \left\| B \right\|\bar u,\;\,\quad \text{for}\; \, i=1,...N
\end{align}
then the distributed observer (\ref{observer}) can ensure that the estimation errors $\tilde \xi _i (i=1,...,N) $ converge to the origin in the prescribed time $T_1$.
\end{theorem}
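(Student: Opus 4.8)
The plan is to reduce the observer to the stacked estimation-error dynamics, construct a graph-weighted quadratic Lyapunov function, use the Riccati equation \eqref{Riccati_P0} and the parameter bounds \eqref{parameter_controller} to produce a differential inequality of exactly the form appearing in Lemma \ref{prescribed_time}, and then invoke that lemma.

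First I would derive the error equation. Writing $\tilde\xi=[\tilde\xi_1^{T},\dots,\tilde\xi_N^{T}]^{T}$ and noting $\xi_i-\xi_j=\tilde\xi_i-\tilde\xi_j$ and $\xi_i-x_0=\tilde\xi_i$, the bracketed neighbourhood-disagreement term in \eqref{observer} equals $e_i$, the $i$-th block of $e:=(H\otimes I_{n_0})\tilde\xi$. Subtracting the leader dynamics \eqref{leader_model} gives, for each $i$,
\[
\dot{\tilde\xi}_i = A_0\tilde\xi_i-\frac{c_i}{2}B_0B_0^{T}P_0e_i-\beta_i\frac{\dot\mu_1}{\mu_1}e_i-\alpha_i\,\text{sgn}(P_0e_i)-B_0u_0 ,
\]
which in stacked form reads $\dot{\tilde\xi}=(I_N\otimes A_0)\tilde\xi-\frac12\big((CH)\otimes B_0B_0^{T}P_0\big)\tilde\xi-\frac{\dot\mu_1}{\mu_1}\big((\mathcal{B}H)\otimes I_{n_0}\big)\tilde\xi-\rho-\mathbf 1_N\otimes B_0u_0$, where $C=\mathrm{diag}(c_i)$, $\mathcal{B}=\mathrm{diag}(\beta_i)$, and $\rho$ is the vector whose $i$-th block is $\alpha_i\,\text{sgn}(P_0e_i)$. (The $\text{sgn}$ term makes the right-hand side discontinuous, so solutions are understood in the Filippov sense and the computation below holds almost everywhere, the contribution of $\text{sgn}(P_0e_i)$ being $\|P_0e_i\|$ where $P_0e_i\neq0$ and immaterial otherwise since $P_0$ is invertible.)

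Next I would take $V=\frac12\tilde\xi^{T}(H\otimes P_0)\tilde\xi$, which is positive definite in $\tilde\xi$ because $H\succ0$ (Lemma 2) and $P_0\succ0$ (the positive definite solution of \eqref{Riccati_P0}). Then $\dot V=\tilde\xi^{T}(H\otimes P_0)\dot{\tilde\xi}$, and I would bound the four groups of terms in turn. For the drift, $\tilde\xi^{T}(H\otimes P_0A_0)\tilde\xi=\frac12\tilde\xi^{T}\big(H\otimes(P_0A_0+A_0^{T}P_0)\big)\tilde\xi$, and \eqref{Riccati_P0} replaces $P_0A_0+A_0^{T}P_0$ by $P_0B_0B_0^{T}P_0-I_{n_0}$. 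The consensus feedback contributes $-\frac12\tilde\xi^{T}\big((HCH)\otimes P_0B_0B_0^{T}P_0\big)\tilde\xi$; since \eqref{parameter_controller} gives $C\succeq\lambda_{\min}(H)^{-1}I\succeq H^{-1}$, hence $HCH\succeq H$, combining this with the $P_0B_0B_0^{T}P_0$ part of the drift leaves $\frac12\tilde\xi^{T}\big((H-HCH)\otimes P_0B_0B_0^{T}P_0\big)\tilde\xi\le0$ together with $-\frac12\tilde\xi^{T}(H\otimes I_{n_0})\tilde\xi\le-\lambda_{\max}(P_0)^{-1}V$. Because $(H\otimes P_0)\tilde\xi$ has $i$-th block $P_0e_i$, the robustness and disturbance terms sum to $-\sum_i\alpha_i\|P_0e_i\|-\sum_i(P_0e_i)^{T}B_0u_0\le-\sum_i\big(\alpha_i-\|B_0\|\bar u\big)\|P_0e_i\|\le0$ by Assumption 2 and \eqref{parameter_controller}. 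Finally the time-varying-gain term equals $-\frac{\dot\mu_1}{\mu_1}\tilde\xi^{T}\big((H\mathcal{B}H)\otimes P_0\big)\tilde\xi\le-2\big(\min_i\beta_i\big)\lambda_{\min}(H)\frac{\dot\mu_1}{\mu_1}V$, using $H\mathcal{B}H\succeq\lambda_{\min}(\mathcal{B})H^2\succeq\lambda_{\min}(\mathcal{B})\lambda_{\min}(H)H$ and $\dot\mu_1\ge0$. Adding the pieces yields $\dot V\le-aV-b\frac{\dot\mu_1}{\mu_1}V$ with $a=\lambda_{\max}(P_0)^{-1}>0$ and $b=2(\min_i\beta_i)\lambda_{\min}(H)>0$, so Lemma \ref{prescribed_time} (with $\mu=\mu_1$, $T=T_1$) gives global prescribed-time stability of $\tilde\xi=0$: every $\tilde\xi_i$ vanishes within the prescribed time $T_1$ and remains zero afterward.

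I expect the main obstacle to be the interplay between the Riccati equation and the graph structure: one must see that the indefinite block $P_0B_0B_0^{T}P_0$ produced by the drift is exactly absorbed by the consensus feedback, and that this happens precisely at the threshold $c_i\ge1/\lambda_{\min}(H)$ of \eqref{parameter_controller}. Making this rigorous relies on the Kronecker/Loewner-order manipulations above ($C\succeq H^{-1}\Rightarrow HCH\succeq H$; $H^2\succeq\lambda_{\min}(H)H$; $H\otimes I_{n_0}\succeq\lambda_{\max}(P_0)^{-1}(H\otimes P_0)$) and, crucially, on choosing a Lyapunov weight that pairs $H$ with $P_0$ rather than with $I_{n_0}$. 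The remaining subtleties — the Filippov interpretation of the $\text{sgn}$ nonlinearity and rigor at the singularity of $\dot\mu_1/\mu_1$ as $t\to T_1+t_0$ — are routine: the former only qualifies the computation with "almost everywhere", and the latter is already packaged inside Lemma \ref{prescribed_time}.
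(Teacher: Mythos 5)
Your proposal is correct and follows essentially the same route as the paper: the same stacked error dynamics, the same Lyapunov function $\tilde\xi^{T}(H\otimes P_0)\tilde\xi$ (up to the harmless factor $\tfrac12$), the same use of the Riccati equation and of the bounds $c_i\ge 1/\lambda_{\min}(H)$, $\alpha_i\ge\|B_0\|\bar u$ to cancel the indefinite and disturbance terms, and the same invocation of Lemma \ref{prescribed_time}. The only cosmetic difference is that you argue via the Loewner ordering $HCH\succeq H$ where the paper orthogonally diagonalizes $H$, and you make explicit the Filippov caveat for the $\mathrm{sgn}$ term that the paper leaves implicit.
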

\begin{proof}
Define ${\eta _i} = \sum\limits_{j = 1}^N {{a_{ij}}} \left( {{\xi _i} - {\xi _j}} \right) + {b_i}\left( {{\xi _i} - {x_0}} \right)$, then, (\ref{observer}) can be rewritten in the following compact form
\begin{align}
\dot \xi  = & {\rm{ }}\left( {{I_N} \otimes {A_0}} \right)\varepsilon  - \frac{1}{2}CH \otimes {B_0}B_0^T{P_0}\tilde \xi  \notag \\
& - \frac{{{{\dot \mu_1(t) }}}}{{{\mu _1(t)}}}\left( {\beta H \otimes {I_n}} \right)\tilde \xi  - (\alpha  \otimes {I_N})f(\eta ),
\end{align}
where $\xi  = {[\xi _1^T,...,\xi _N^T]^T} \in {R^{N{n_0}}}$, $\tilde \xi  = {[\tilde \xi _1^T,...,\tilde \xi _N^T]^T} \in {R^{N{n_0}}}$, $C = diag\{ {c_1},...,{c_N}\}  \in {R^{N \times N}}$, $\beta  = diag\{ {\beta _1},...,{\beta _N}\}  \in {R^{N \times N}}$ and $\alpha  = diag\{ {\alpha _1},...,{\alpha _N}\}  \in {R^{N \times N}}$, $f(\eta ) = [\text{sgn}^T(P_0\eta_1),...,\text{sgn}^T(P_0\eta_N)]^T$.\par
According to (\ref{leader_model}) and (\ref{observer}), it yields that
\begin{align}\label{estimation_error}
\dot {\tilde \xi}  = &{\rm{ }}\left( {{I_N} \otimes {A_0}} \right)\tilde \xi  - \frac{1}{2}CH \otimes {B_0}B_0^T{P_0}\tilde \xi  \notag \\
 &  -  \frac{{{{\dot \mu }_1(t)}}}{{{\mu _1}(t)}}\left( {\beta H \otimes {I_n}} \right)\tilde \xi- (\alpha  \otimes {I_N})f(\eta ) - ({I_N} \otimes B{u_0}).
\end{align}\par
Now, the following Lyapunov candidate is chosen:
\begin{align}\label{V}
V = {\tilde \xi ^T}\left( {H \otimes {P_0}} \right)\tilde \xi.
\end{align}\par
Differentiating (\ref{V}) along the trajectory (\ref{estimation_error}) gives
\begin{align}\label{dv}
\dot V = & 2{{\tilde \xi }^T}\left( {H \otimes {P_0}} \right)[(I_N \otimes {A_0})\tilde \xi  - \frac{1}{2}CH \otimes {B_0}B_0^ T {P_0}\tilde \xi  \notag \\
&- \frac{{{{\dot \mu }_1}(t)}}{{{\mu _1}(t)}}\left( {\beta H \otimes {I_n}} \right)\tilde \xi - (\alpha  \otimes {I_n})f(\eta ) - {I_N} \otimes B{u_0}]\notag \\
{\rm{     = }}&{{\tilde \xi }^ T }(H \otimes (A_0^T{P_0} + {P_0}{A_0}) - HC{H} \otimes {P_0}{B_0}B_0^ T {P_0})\tilde \xi  \notag \\
& - 2\frac{{{{\dot \mu }_1}(t)}}{{{\mu _1}(t)}}{{\tilde \xi }^T}\left( {H\beta H \otimes {P_0}} \right)\tilde \xi \notag \\ &- 2{{\tilde \xi }^T}\left( {H \otimes {P_0}} \right)({I_N} \otimes B{u_0})\notag \\
& - 2{{\tilde \xi }^T}\left( {H \otimes {P_0}} \right)(\alpha  \otimes {I_n})f(\eta ).
\end{align}\par
Per Lemma 2, there exists an orthogonal matrix $Q$ such that
\begin{align*}
{Q^{ T}}HQ = \Lambda  = diag\{ {\lambda _1},...,{\lambda _N}\},
\end{align*}
where ${\lambda _i} (i=1,...,N)$ are the eigenvalues of the matrix $H$. \par
Thus, one has that
\begin{align}\label{duijiaohua}
&{{\tilde \xi }^T}(H \otimes (A_0^T{P_0} + {P_0}{A_0}) - HCH \otimes {P_0}{B_0}B_0^T{P_0})\tilde \xi \notag \\
\leq & {{\tilde \xi }^T}(H \otimes (A_0^T{P_0} + {P_0}{A_0}) - {\lambda _{\min }}(C){H^2} \otimes {P_0}{B_0}B_0^T{P_0})\tilde \xi \notag \\
\leq &{{\tilde \xi }^T}(H \otimes (A_0^T{P_0} + {P_0}{A_0} \notag\\
& -{\lambda _{\min }}(C){\lambda _{\min }}(H) \otimes {P_0}{B_0}B_0^T{P_0}))\tilde \xi \notag \\
 \leq & {{\bar \xi}^T}(\Lambda  \otimes (A_0^T{P_0} + {P_0}{A_0} \notag \\
 &- {\lambda _{\min }}(C){\lambda _{\min }}(H) \otimes {P_0}{B_0}B_0^ T {P_0}))\bar \xi,
\end{align}
where $\bar \xi  = ({Q} \otimes {I_n})\tilde \xi$.\par
Due to (\ref{parameter_controller}) and (\ref{duijiaohua}), it gives
\begin{align}\label{duijiaohua2}
&\quad {{\tilde \xi }^T}(H \otimes (A_0^T{P_0} + {P_0}{A_0}) - HCH \otimes {P_0}{B_0}B_0^T{P_0})\tilde \xi \notag\\
 &\le {{\bar \xi }^T}(\Lambda  \otimes  - I)\bar \xi  \le  - {\lambda _{\min }}(H){{\bar \xi }^T}\bar \xi \notag\\
 & \le  - \frac{{{\lambda _{\min }}(H)V}}{{{\lambda _{\max }}(H \otimes {P_0})}}.
\end{align}\par
Substituting (\ref{duijiaohua2}) into (\ref{dv}) yields
\begin{align}\label{fangsuo33}
\dot V \le & - \frac{{{\lambda _{\min }}(H)V}}{{{\lambda _{\max }}(H \otimes {P_0})}} - 2\frac{{{{\dot \mu }_1}(t)}}{{{\mu _1}(t)}}{{\tilde \xi }^T}\left( {H\beta H \otimes {P_0}} \right)\tilde \xi \notag \\
& - 2{{\tilde \xi }^T}\left( {H \otimes {P_0}} \right)(\alpha  \otimes {I_n})f(\eta )\notag \\
& - 2{{\tilde \xi }^T}\left( {H \otimes {P_0}} \right)({I_N} \otimes B{u_0}).
\end{align}\par
Further, one has that
\begin{align}\label{fangsuo11}
&\quad \;2\frac{{{{\dot \mu }_1}(t)}}{{{\mu _1}(t)}}{{\tilde \xi }^T}\left( {H\beta H \otimes {P_0}} \right)\tilde \xi \notag \\
& \ge 2{\lambda _{\min }}(\beta ){\lambda _{\min }}(H)\frac{{{{\dot \mu }_1}(t)}}{{{\mu _1}(t)}}{{\tilde \xi }^T}\left( {H \otimes {P_0}} \right)\tilde \xi \notag \\
 &= 2{\lambda _{\min }}(\beta ){\lambda _{\min }}(H)\frac{{{{\dot \mu }_1}(t)}}{{{\mu _1}(t)}}V,
 \end{align}
 and
 \begin{align}\label{fangsuo22}
 & - 2{{\tilde \xi }^T}(H \otimes {P_0})(\alpha  \otimes {I_n})f(\eta ) - 2{{\tilde \xi }^T}(H \otimes {P_0})({I_n} \otimes B{u_0})\notag \\
 & = 2[ - {f^T}(\eta )(\alpha H \otimes {P_0})\tilde \xi  - {({I_n} \otimes B{u_0})^T}(H \otimes {P_0})\tilde \xi ]\notag \\
 & \le 2\sum\limits_{i = 1}^N \Big[{\frac{{ - {\alpha _i}{{({P_0}{\eta _i})}^T}{P_0}{\eta _i}}}{{\left\| {{P_0}{\eta _i}} \right\|}}}  + \left\| {B{u_0}} \right\|\left\| {{P_0}{\eta _i}} \right\| \Big]\notag \\
 &= 2\sum\limits_{i = 1}^N {\left[ { - ({\alpha _i} - \left\| {B{u_0}} \right\|)\left\| {{P_0}{\eta _i}} \right\|} \right]} \notag \\
 & \le 2\sum\limits_{i = 1}^N {\left[ { - ({\alpha _i} - \left\| {{B_0}} \right\|\bar u)\left\| {{P_0}{\eta _i}} \right\|} \right]}  \le 0.
 \end{align}\par
Substituting (\ref{fangsuo11}) and (\ref{fangsuo22}) into (\ref{fangsuo33}), one has that
\begin{align}
\dot V \le  - \frac{{{\lambda _{\min }}(H)}}{{{\lambda _{\max }}(H \otimes {P_0})}}V - 2{\lambda _{\min }}(\beta ){\lambda _{\min }}(H)\frac{{{{\dot \mu }_1}(t)}}{{{\mu _1}(t)}}V.
\end{align}\par
Per Lemma 1, it is obtained that the estimation error $\tilde \xi $ converges to the origin in the prescribed time $T_1$.\par
The proof is therefore completed.
\end{proof}

\subsection{The Design of the Prescribed Time Formation Controller}
In this subsection, the local formation controller is proposed for each follower to achieve the desired time-varying formation.\par
Under Assumption 3, there is a matrix $B_i^+ \in R^{m_i\times n_i}$, such that  ${B_i}B_i^ +  = {I_{_{{n_i}}}}$. Furhter, there exists a matrix $F_i= B_i^ + {X_i}{B_0}$ as the solution of the equation ${X_i}{B_0} = {B_i}{F_i}$.\par

For follower $i$, the desired time-varying formation pattern   is generated by the following local exosystem
\begin{align}\label{formation_system}
{{\dot {\tilde h}}_{i}} = {A_{hi}}{{\tilde h}_{i}},\notag \\
{h_{i}} = {C_{hi}}{{\tilde h}_{i}},
\end{align}
where $\tilde h_i \in R^{n_{hi}}$, $A_{hi} \in R^{n_{hi}\times n_{hi}}$ and $C_{hi} \in R^{q \times n_{hi}}$.
\begin{assumption}
For all $i=1,..., N$, the following regulation equation has solution pairs $({X_{hi}},{\rm{ }}{U_{hi}})$.
\begin{align*}
X_{hi}A_{hi} &= A_{i}X_{hi} + B_{i}U_{hi},\notag \\
{C_i}{X_{hi}} & = {C_{{hi}}}.
\end{align*}
\end{assumption}
To achieve the desired formation configuration, the formation controller $u_i\;(i=1,...,N)$ is constructed as follows
\begin{align}\label{u}
{u _i}(t) = \left\{ {\begin{array}{*{20}{l}}
{\textbf{0},}&{t \leq T_1}\\
{u_{i,1}+u_{i,2}+u_{i,3},}&{t >T_1}
\end{array}} \right.
\end{align}
where the expression of $u_{i,1}$ , $u_{i,2}$ and $u_{i,3}$ are given as
\begin{align}\label{ui1}
{u_{i,1}} = {K_{i,1}}{x_i} + {K_{i,2}}{\xi _i} + {K_{i,3}}{\tilde h_i} - {K_{i,4}}\frac{{{{\dot \mu }_2}(t)}}{{{\mu _2}(t)}}B_i^ + {e_i},
\end{align}
\begin{align}\label{ui2}
{u_{i,2}} =  - \frac{{{\rho _1}B_i^T{P_i}{e_i}}}{{\left\| {B_i^T{P_i}{e_i}} \right\|}} - \frac{{{\rho_2}B_i^T{P_i}{e_i}\left\| {{x_i}} \right\|}}{{\left\| {B_i^T{P_i}{e_i}} \right\|}} - \frac{{{\rho _3}B_i^T{P_i}{e_i}\left\| {{F_i}} \right\|}}{{\left\| {B_i^T{P_i}{e_i}} \right\|}},
\end{align}

\begin{align}\label{ui3}
{u_{i,3}} = & - {\rho_4}\Big(\frac{{B_i^T{P_i}{e_i}\left\| {{K_{1i}}{x_i}} \right\|}}{{\left\| {B_i^T{P_i}{e_i}} \right\|}} + \frac{{B_i^T{P_i}{e_i}\left\| {{K_{2i}}{\xi _i}} \right\|}}{{\left\| {B_i^T{P_i}{e_i}} \right\|}} \notag \\
& + \frac{{B_i^T{P_i}{e_i}\left\| {{K_{3i}}{{\tilde h}_i}} \right\|}}{{\left\| {B_i^T{P_i}{e_i}} \right\|}}+\frac{{{K_{4i}}B_i^T{P_i}{e_i}\left\| {\frac{{{{\dot \mu }_2}(t)}}{{{\mu _2}(t)}}B_i^ + {e_i}} \right\|}}{{\left\| {B_i^T{P_i}{e_i}} \right\|}}\Big),
\end{align}
where
\begin{align}\label{ei}
{e_i} = {x_1} - {X_i}{\xi _i} - {X_{hi}}{\tilde h_i},
\end{align}
and ${\rho _i}(i = 1,...,4)$  are the positive constants satisfying
\begin{align}\label{parameter_controller_selection}
\left\{ {\begin{array}{*{20}{l}}
{{\rho _1} \ge \frac{{(1 + \bar M)\bar d}}{{1 - \bar M}}},\\
{{\rho _2} \ge \frac{{\bar N}}{{1 - \bar M}}},\\
{{\rho _3} \ge \frac{{\bar u}}{{1 - \bar M}}},\\
{{\rho _4} \ge \frac{{\bar M}}{{1 - \bar M}}},
\end{array}} \right.
\end{align}
Besides, the gain matrix $K_{i,1}$ is chosen as ${K_{i,1}} =  - B_i^T{P_i}$ and the positive-definite matrix ${P_i}$ is the solution of the following Riccati equation
\begin{align}\label{likati2}
A_i^T{P_i} + {P_i}{A_i} - {P_i}{B_i}B_i^T{P_i} + {I_{{n_i}}} = 0.
\end{align}
The gain matrices $K_{i,2}$  and $K_{i,3}$  are chosen as ${K_{i,2}} = {U_i} - {K_{i,1}}{X_i}$  and ${K_{i,3}} = {U_{hi}} - {K_{i,1}}{X_{hi}}$.  ${K_{i,4}}$ is a positive constant. $\mu_2(t) (t\geq T_1)$ is a time scaling function defined as
\begin{align}\label{mu2}
{\mu _2}(t) = \left\{ {\begin{array}{*{20}{l}}
{\frac{{T_2^v}}{{{{\left( {{T_2} + {T_1} - t} \right)}^v}}},}&{t \in \left[ {{T_1},{T_1} + {T_2}} \right)}\\
{1,}&{t \in \left[ {{T_1} + {T_2},\infty } \right)}
\end{array}} \right.
\end{align}
where $T_2$ is the prescribed convergence time of the whole controlled system satisfying ${T_2} > {T_1}$.\par
Then, the following theorem is given to describe the stability of the controlled system under the proposed formation control strategy.
\begin{theorem}\label{thm2}
 Considering the multi-agent system (\ref{model_follower}) and (\ref{leader_model}), suppose the Assumptions 1-5 hold, then under the observer (\ref{observer}) and the controller (\ref{u}) with parameters satisfying (\ref{parameter_controller}) and (\ref{parameter_controller_selection}),
the output formation errors $\bar e_i = y_i-y_0-h_i (i=1,...,N)$ converge to the origin in the prescribed time $T_2$.
\end{theorem}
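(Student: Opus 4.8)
My plan is to reduce the output formation error $\bar e_i$ to the internal error $e_i$ of (\ref{ei}), show that $e_i$ obeys a linear error system perturbed only by terms dominated by the robust inputs $u_{i,2},u_{i,3}$, and then apply Lemma \ref{prescribed_time}. First I would note that on $[0,T_1]$ the control is identically zero, so each $x_i$ obeys the linear system $\dot x_i=(A_i+\Delta A_i)x_i+(B_i+\Delta B_i)d_i$ driven by the bounded disturbance $d_i$; hence $x_i(T_1)$, and therefore $e_i(T_1)$, are finite. By Theorem \ref{thm1}, $\tilde\xi_i(t)=0$, i.e. $\xi_i(t)=x_0(t)$, for all $t\ge T_1$. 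Using $C_iX_i=C_0$ (Assumption 4) and $C_iX_{hi}=C_{hi}$ (Assumption 5), for $t\ge T_1$ one gets $\bar e_i=C_ix_i-C_0x_0-C_{hi}\tilde h_i=C_i(x_i-X_ix_0-X_{hi}\tilde h_i)=C_ie_i$, so it suffices to show $e_i\to 0$ in the prescribed time.

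Next I would differentiate $e_i=x_i-X_i\xi_i-X_{hi}\tilde h_i$ for $t>T_1$. Substituting (\ref{model_follower}) with $\Delta A_i=B_iN_i(t)$, $\Delta B_i=B_iM_i(t)$, $\dot\xi_i=A_0x_0+B_0u_0$, $\dot{\tilde h}_i=A_{hi}\tilde h_i$, and using the regulation equations of Assumptions 4--5, the identity $X_iB_0=B_iF_i$, the gains $K_{i,2}=U_i-K_{i,1}X_i$, $K_{i,3}=U_{hi}-K_{i,1}X_{hi}$, $K_{i,1}=-B_i^TP_i$, and $B_iB_i^{+}=I_{n_i}$, all terms in $x_0$ and $\tilde h_i$ cancel, leaving
\[
\dot e_i=\big(A_i-B_iB_i^TP_i\big)e_i-K_{i,4}\frac{\dot\mu_2}{\mu_2}e_i+B_i\Big[N_ix_i+M_iu_i+(I+M_i)d_i-F_iu_0+u_{i,2}+u_{i,3}\Big].
\]
Taking $W_i=e_i^TP_ie_i$ and using the Riccati equation (\ref{likati2}) to replace $A_i^TP_i+P_iA_i$, the drift yields $-\|e_i\|^2-\|B_i^TP_ie_i\|^2-2K_{i,4}\frac{\dot\mu_2}{\mu_2}W_i$, so with $z_i:=B_i^TP_ie_i$ we have $\dot W_i\le-\|e_i\|^2-2K_{i,4}\frac{\dot\mu_2}{\mu_2}W_i+2z_i^T\big[N_ix_i+M_iu_i+(I+M_i)d_i-F_iu_0+u_{i,2}+u_{i,3}\big]$, and everything reduces to showing that this last bracket, multiplied by $2z_i^T$, is nonpositive.

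That estimate is the step I expect to be the main obstacle, mainly because $M_iu_i=M_i(u_{i,1}+u_{i,2}+u_{i,3})$ is self-referential and the perturbation bounds must be matched exactly to the robust gains. Writing $\Phi_i$ for the bracketed sum appearing in $u_{i,3}$, the design gives $2z_i^Tu_{i,2}=-2\|z_i\|(\rho_1+\rho_2\|x_i\|+\rho_3\|F_i\|)$ and $2z_i^Tu_{i,3}=-2\rho_4\|z_i\|\Phi_i$ exactly, while $2z_i^TN_ix_i\le2\bar N\|x_i\|\|z_i\|$, $2z_i^T(I+M_i)d_i\le2(1+\bar M)\bar d\|z_i\|$, $-2z_i^TF_iu_0\le2\bar u\|F_i\|\|z_i\|$ and, via $\|u_{i,1}\|\le\Phi_i$, $\|u_{i,2}\|=\rho_1+\rho_2\|x_i\|+\rho_3\|F_i\|$, $\|u_{i,3}\|=\rho_4\Phi_i$ and $|z_i^TM_iz_i|\le\bar M\|z_i\|^2$, also $2z_i^TM_iu_i\le2\bar M\|z_i\|\big(\Phi_i+\rho_1+\rho_2\|x_i\|+\rho_3\|F_i\|+\rho_4\Phi_i\big)$. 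Collecting the coefficient of $2\|z_i\|$ by the factors $1$, $\|x_i\|$, $\|F_i\|$ and $\Phi_i$ produces $-(1-\bar M)\rho_1+(1+\bar M)\bar d$, $-(1-\bar M)\rho_2+\bar N$, $-(1-\bar M)\rho_3+\bar u$ and $-(1-\bar M)\rho_4+\bar M$, each $\le0$ precisely under the selection (\ref{parameter_controller_selection})---which is how those four inequalities arise.

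Consequently $\dot W_i\le-\frac{1}{\lambda_{\max}(P_i)}W_i-2K_{i,4}\frac{\dot\mu_2}{\mu_2}W_i$, and Lemma \ref{prescribed_time} (with $a=1/\lambda_{\max}(P_i)$, $b=2K_{i,4}$, time-scaling $\mu_2$, initial time $T_1$, horizon $T_2$) gives $e_i(t)\to0$ as $t\to(T_1+T_2)^-$ and $e_i(t)=0$ for $t\ge T_1+T_2$; since $\bar e_i=C_ie_i$ on $[T_1,\infty)$, the output formation errors converge to the origin in the prescribed time. I would also note that the non-Lipschitz locus $z_i=0$ is treated in the usual Filippov/sliding-mode sense, and that the non-integrable growth of $\dot\mu_2/\mu_2$ near $T_1+T_2$---rather than being an obstacle---is precisely the mechanism enforcing settling in the prescribed time, as captured by Lemma \ref{prescribed_time}.
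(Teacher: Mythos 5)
Your proposal is correct and follows essentially the same route as the paper: zero control on $[0,T_1]$, invoking Theorem \ref{thm1} to set $\xi_i=x_0$ for $t\ge T_1$, deriving the $e_i$-dynamics via the regulation equations, using the Lyapunov function $e_i^TP_ie_i$ with the Riccati equation (\ref{likati2}), matching each perturbation term (including the self-referential $M_iu_i$ term) against $\rho_1,\dots,\rho_4$ exactly as in (\ref{parameter_controller_selection}), and closing with Lemma \ref{prescribed_time} and $\bar e_i=C_ie_i$. Your bookkeeping of the $\Phi_i$ coefficient and your remark on the $z_i=0$ locus correspond to the paper's inequality (\ref{buqueding}) and its Remark 1, respectively.
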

\begin{proof}
For $t \le {T_1}$, ${u_i} = \textbf{0}$, it is easy to find that all states in the controlled system are bounded.
Then, for $t > {T_1}$, it holds from Theorem 1 that ${\xi _i} = {x_0}$. Meanwhile, from  (\ref{model_follower}) and  (\ref{ui1}), the derivative of ${e_i}$  satisfies
\begin{align}\label{de}
{{\dot e}_i} = &({A_i} + \Delta {A_i}){x_i} + ({B_i} + \Delta {B_i})({u_i} + {d_i})\notag \\
& - {X_i}({A_0}{x_0} + {B_0}{u_0}) - {X_{hi}}{A_{hi}}{{\tilde h}_i}\notag \\
= & {A_i}{x_i} + B_i{u_{i,1}} - {X_i}({A_0}{x_0} + {B_0}{u_0}) - {X_{hi}}{A_{hi}}{{\tilde h}_i}\notag \\ &+ ({B_i} + \Delta {B_i}){d_i}+ \Delta {A_i}{x_i} + \Delta {B_i}{u_{i,1}} + ({B_i} +\notag \\
& \Delta {B_i})({u_{i,2}} + {u_{i,3}})\notag \\
{\rm{    =  (}}&{A_i} + B_i{K_{i,1}}){e_i} + (A _i+ B_i{K_{i,1}})X_i{x_0} + B_i{K_{i,2}}{x_0} \notag \\
&- {X_i}{A_0}{x_0} + (A_i + B_i{K_{i,1}})X_{hi}{{\tilde h}_i} + B_i{K_{i,3}}{{\tilde h}_i} \notag \\
& - {X_{hi}}{A_{hi}}{{\tilde h}_i}- {X_i}{B_0}{u _0} + ({B_i} + \Delta {B_i}){d_i} + \Delta {A_i}{x_i}  \notag \\
& + \Delta {B_i}{u_{i,1}} + ({B_i} + \Delta {B_i})({u_{i,2}} + {u_{i,3}})- {K_{i,4}}\frac{{{{\dot \mu }_2}(t)}}{{{\mu _2}(t)}}{e_i}\notag \\
{\rm{  }} = & {\rm{(}}{A_i} + B_i{K_{i,1}}){e_i} - {X_i}{B_0}{u _0} + ({B_i} + \Delta {B_i}){d_i} + \Delta {A_i}{x_i}\notag \\
{\rm{      }} &  + ({B_i} + \Delta {B_i})({u_{i,2}} + {u_{i,3}})+ \Delta {B_i}{u_{i,1}} - {K_{i,4}}\frac{{{{\dot \mu }_2}(t)}}{{{\mu _2}(t)}}{e_i}.
\end{align}\par
For $t>T_1$, choose the Lyapunov candidate for follower $i\;(i=1,...,N)$ as
\begin{align}\label{barV}
{\bar V_i} = \frac{1}{2}e_i^T{P_i}{e_i}.
\end{align}\par
Then, from (\ref{ui1}), the derivative of ${\bar V_i}$ along the trajectory (\ref{de}) satisfies
\begin{align}\label{dV}
{{\dot {\bar V}}_i} =& e_i^T{P_i}{{\dot e}_i}\notag \\
 =& {e_i}{P_i}(A_i+ B_i{K_{i,1}}){e_i} - {K_{i,4}}\frac{{{{\dot \mu }_2}(t)}}{{{\mu _2}(t)}}e_i^T{P_i}{e_i}  \notag \\
 &+ e_i^T{P_i}\Delta {A_i}{x_i}+ e_i^T{P_i}({B_i} + \Delta {B_i}){d_i}\notag \\
&  - e_i^T{P_i}{X_i}{B_0}{u_0}+ e_i^T{P_i}({B_i} + \Delta {B_i})({u_{i,2}} + {u_{i,3}})  \notag \\
& + e_i^ T {P_i}\Delta {B_i}{u_{i,1}}\notag \\
 \le & \frac{1}{2}{e_i}({P_i}(A _i+ B_i{K_{i,1}}){ + }{(A_i + B_i{K_{i,1}})^T}{P_i}){e_i}\notag \\
  &- {K_{i,4}}\frac{{{{\dot \mu }_2}(t)}}{{{\mu _2}(t)}}e_i^T{P_i}{e_i} + \left\| {e_i^T{P_i}{B_i}} \right\|\left\| {{N_i}(t)} \right\|\left\| {{x_i}} \right\| \notag \\
& \left\| {e_i^T{P_i}{B_i}} \right\|(1 + \left\| {M(t)} \right\|)\bar d - e_i^T{P_i}{B_i}{F_i}{u_0} \notag \\&+ e_i^T{P_i}({B_i} + \Delta {B_i})({u_{i,2}} + {u_{i,3}})
+ e_i^ T {P_i}\Delta {B_i}{u_{i,1}}.
\end{align}\par
Substituting (\ref{ui2}) and (\ref{likati2}) into (\ref{dV}), it gives
\begin{align}\label{dV2}
{{\dot {\bar V}}_i} \le  & - \frac{1}{2}e_i^T{e_i} - {K_{i,4}}\frac{{{{\dot \mu }_2}(t)}}{{{\mu _2}(t)}}e_i^T{P_i}{e_i} + \bar N\left\| {e_i^T{P_i}{B_i}} \right\|\left\| {{x_i}} \right\|\notag \\
& \left\| {e_i^T{P_i}{B_i}} \right\|(1 + \bar M)\bar d + \left\| {e_i^T{P_i}{B_i}} \right\|\left\| {{F_i}} \right\|\left\| {{u_0}} \right\| \notag \\
& + e_i^T{P_i}{B_i}\Big( - \frac{{{\rho _1}{B_i^T}{P_i}{e_i}}}{{\left\| {{B_i^T}{P_i}{e_i}} \right\|}} - \frac{{{\rho _2}\left\| {{x_i}} \right\|{B_i^T}{P_i}{e_i}}}{{\left\| {{B_i^T}{P_i}{e_i}} \right\|}}\notag \\
& - \frac{{{\rho _3}\left\| {{F_i}} \right\|{B_i^T}{P_i}{e_i}}}{{\left\| {{B_i^T}{P_i}{e_i}} \right\|}}\Big)
+ e_i^T{P_i}\Delta {B_i}\Big( - \frac{{{\rho_1}{B_i}{P_i}{e_i}}}{{\left\| {{B_i^T}{P_i}{e_i}} \right\|}} \notag \\
&- \frac{{{\rho _2}\left\| {{x_i}} \right\|{B_i^T}{P_i}{e_i}}}{{\left\| {{B_i^T}{P_i}{e_i}} \right\|}} - \frac{{{\rho _3}\left\| {{F_i}} \right\|{B_i^T}{P_i}{e_i}}}{{\left\| {{B_i^T}{P_i}{e_i}} \right\|}}\Big) \notag \\& + e_i^T{P_i}({B_i} + \Delta {B_i}){u_{i,3}}+ e_i^ T {P_i}\Delta {B_i}{u_{i,1}}.
\end{align}\par
It can be obtained from (\ref{parameter_controller_selection}) that
\begin{align}\label{fangsuo1}
& \left\| {e_i^T{P_i}{B_i}} \right\|(1 + \bar M)\bar d - e_i^T{P_i}{B_i}\frac{{{\rho _1}{B_i^T}{P_i}{e_i}}}{{\left\| {{B_i}{P_i}{e_i}} \right\|}}\notag \\
&  - e_i^T{P_i}\Delta {B_i}\frac{{{\rho _1}{B_i^T}{P_i}{e_i}}}{{\left\| {{B_i}{P_i}{e_i}} \right\|}}\notag \\
 \le  & - {\rho _1}\left\| {{B_i^T}{P_i}{e_i}} \right\| + \left\| {{B_i^T}{P_i}{e_i}} \right\|{\rho _1}\bar M \notag \\
 &+ \left\| {e_i^T{P_i}{B_i}} \right\|(1 + \bar M)\bar d \notag\\
 \le &  - \left\| {{B_i^T}{P_i}{e_i}} \right\|({\rho_1}(1 - \bar M) - (1 + \bar M)\bar d) \le  0,
\end{align}
\begin{align}\label{fangsuo2}
&\bar N\left\| {e_i^T{P_i}{B_i}} \right\|\left\| {{x_i}} \right\| - e_i^T{P_i}{B_i}\frac{{{\rho _2}\left\| {{x_i}} \right\|{B_i^T}{P_i}{e_i}}}{{\left\| {{B_i^T}{P_i}{e_i}} \right\|}} \notag \\
&- e_i^T{P_i}\Delta {B_i}\frac{{{\rho _2}\left\| {{x_i}} \right\|{B_i^T}{P_i}{e_i}}}{{\left\| {{B_i}{P_i}{e_i}} \right\|}} \notag\\
 \le& \bar N\left\| {e_i^T{P_i}{B_i}} \right\|\left\| {{x_i}} \right\| - {\rho _2}\left\| {{x_i}} \right\|\left\| {e_i^T{P_i}{B_i}} \right\| \notag \\
 & + {\rho_2}\bar M\left\| {{x_i}} \right\|\left\| {e_i^T{P_i}{B_i}} \right\| \notag \\
 \le & - \left\| {{x_i}} \right\|\left\| {e_i^T{P_i}{B_i}} \right\|({\rho _2}(1 - \bar M) - \bar N)
 \le 0,
\end{align}
and
\begin{align}\label{fangsuo3}
& \left\| {e_i^T{P_i}{B_i}} \right\|\left\| {{F_i}} \right\|\left\| {{u_0}} \right\| - e_i^T{P_i}{B_i}\frac{{{\rho  _3}\left\| {{F_i}} \right\|{B_i^T}{P_i}{e_i}}}{{\left\| {{B_i^T}{P_i}{e_i}} \right\|}} \notag \\
&- e_i^T{P_i}\Delta {B_i}\frac{{{\rho _3}\left\| {{F_i}} \right\|{B_i^T}{P_i}{e_i}}}{{\left\| {{B_i}{P_i}{e_i}} \right\|}} \notag \\
 \le & \left\| {e_i^T{P_i}{B_i}} \right\|\left\| {{F_i}} \right\|\left\| {{u_0}} \right\| - {\rho _3}\left\| {e_i^T{P_i}{B_i}} \right\|\left\| {{F_i}} \right\| \notag \\
 &+ {\rho_3}\bar M\left\| {e_i^T{P_i}{B_i}} \right\|\left\| {{F_i}} \right\|\notag \\
 \le & - \left\| {e_i^T{P_i}{B_i}} \right\|\left\| {{F_i}} \right\|({\rho _3}(1 - \bar M) - \left\| {{u_0}} \right\|)\notag \\
 \le &  - \left\| {e_i^T{P_i}{B_i}} \right\|\left\| {{F_i}} \right\|({\rho _3}(1 - \bar M) - \bar u) \le 0.
\end{align}
Then, substituting (\ref{fangsuo1})-(\ref{fangsuo3}) into (\ref{dV2}) leads to
\begin{align}\label{dv3}
{\dot {\bar {V}}_i} \le &  - \frac{1}{2}e_i^T{e_i} - {K_{i,4}}\frac{{{{\dot \mu }_2}(t)}}{{{\mu _2}(t)}}e_i^T{P_i}{e_i} + e_i^T{P_i}({B_i} + \Delta {B_i}){u_{i,3}} \notag \\
& + e_i^ T {P_i}\Delta {B_i}{u_{i,1}}.
\end{align}
Further, by combining (\ref{ui3}) and (\ref{dv3}), one can get
\begin{align}\label{buqueding}
\dot {\bar V}_i \le&    - \frac{1}{2}e_i^T{e_i} - {K_{i,4}}\frac{{{{\dot \mu }_2}}}{{{\mu _2}}}e_i^T{P_i}{e_i} + e_i^T{P_i}{(B_i+ \Delta B_i)}\Big(  - {\rho _4}\cdot \notag \\
&\Big(\frac{{B_i^T{P_i}{e_i}\left\| {{K_{i,1}}{x_i}} \right\|}}{{\left\| {B_i^T{P_i}{e_i}} \right\|}}+ \frac{{B_i^T{P_i}{e_i}\left\| {{K_{i,2}}{\xi _i}} \right\|}}{{\left\| {B_i^T{P_i}{e_i}} \right\|}}\notag \\
&+ \frac{{{K_{i,4}}B_i^T{P_i}{e_i}\left\| {\frac{{{{\dot \mu }_2}(t)}}{{{\mu _2}(t)}}B_i^ + {e_i}} \right\|}}{{\left\| {B_i^T{P_i}{e_i}} \right\|}}\Big)\Big) + e_i^ T {P_i}\Delta {B_i}{u_{i,1}}\notag  \\
 \le & - \frac{1}{2}e_i^T{e_i} - {K_{i,4}}\frac{{{{\dot \mu }_2}(t)}}{{{\mu _2}(t)}}e_i^T{P_i}{e_i} \notag \\
 & - {\rho _4}\left\| {B_i^T{P_i}{e_i}} \right\|\left\| {{K_{i,1}}{x_i}} \right\| \notag \\
 & - {\rho_4}\left\| {B_i^T{P_i}{e_i}} \right\|\left\| {{K_{i,2}}{\xi _i}} \right\|  - {\rho _4}\left\| {B_i^T{P_i}{e_i}} \right\|\left\| {{K_{i,3}}{{\tilde h}_i}} \right\|\notag \\
 &- {\rho_4}{K_{i,4}}\left\| {B_i^T{P_i}{e_i}} \right\|\left\| {\frac{{{{\dot \mu }_2}(t)}}{{{\mu _2}(t)}}B_i^ + {e_i}} \right\| \notag \\
 &+ {\rho_4}\bar M\left\| {B_i^T{P_i}{e_i}} \right\|\left\| {{K_{i,1}}{x_i}} \right\|\notag \\ &+{\rho_4}\bar M\left\| {B_i^T{P_i}{e_i}} \right\|\left\| {{K_{i,2}}{\xi _i}} \right\| \notag \\
 & + {\rho_4}\bar M\left\| {B_i^T{P_i}{e_i}} \right\|\left\| {{K_{i,3}}{{\tilde h}_i}} \right\| \notag \\
 & +{\rm{  }}{\rho_4}\bar M\left\| {B_i^T{P_i}{e_i}} \right\|\left\| {\frac{{{{\dot \mu }_2}(t)}}{{{\mu _2}(t)}}B_i^ + {e_i}} \right\| \notag \\
 & + \bar M\left\| {B_i^T{P_i}{e_i}} \right\|\left\| {{K_{i,1}}{x_i}} \right\|
 + \bar M\left\| {B_i^T{P_i}{e_i}} \right\|\left\| {{K_{i,2}}{\xi _i}} \right\| \notag \\ &+ \bar M\left\| {B_i^T{P_i}{e_i}} \right\|\left\| {{K_{i,3}}{{\tilde h}_i}} \right\| \notag \\
& {\rm{     + }}\bar M{K_{i,4}}\left\| {B_i^T{P_i}{e_i}} \right\|\left\| {\frac{{{{\dot \mu }_2}(t)}}{{{\mu _2}(t)}}B_i^ + {e_i}} \right\|\notag \\
= & -\frac{1}{2}e_i^T{e_i} - {K_{i,4}}\frac{{{{\dot \mu_2 }{(t)}}}}{{{\mu _2(t)}}}e_i^T{P_i}{e_i} - \Big(\left\| {B_i^T{P_i}{e_i}} \right\|\left\| {{K_{i,1}}{x_i}} \right\| \notag \\
& + \left\| {B_i^T{P_i}{e_i}} \right\|\left\| {{K_{i,2}}{\xi _i}} \right\|
 + \left\| {B_i^T{P_i}{e_i}} \right\|\left\| {{K_{i,3}}{{\tilde h}_i}} \right\| +\notag  \\
& {K_{i,4}}\left\| {B_i^T{P_i}{e_i}} \right\|\left\| {\frac{{{{\dot \mu }_2}(t)}}{{{\mu _2}(t)}}B_i^ + {e_i}} \right\|\Big)({\rho_4}(1 - \bar M) - \bar M).
\end{align}\par
Therefore, according to (\ref{parameter_controller_selection}) and (\ref{buqueding}), it holds that
\begin{align}
\dot{ \bar V}_i \le &  - \frac{1}{2}e_i^T{e_i} - {K_{i,4}}\frac{{{{\dot \mu }_2}(t)}}{{{\mu _2}(t)}}e_i^T{P_i}{e_i}\notag \\
 \le &  - \frac{{{\bar V_i}}}{{{\lambda _{\max }}({P_i})}} - {2K_{i,4}}\frac{{{{\dot \mu }_2}(t)}}{{{\mu _2}(t)}}{\bar V_i}.
\end{align}\par
Per Lemma 2, it is obtained that ${e_i}$ converges to the origin in the prescribed time $T_2$. Then, From (\ref{model_follower}), (\ref{leader_model}) and (\ref{ei}), one has that
\begin{align}\label{output_error}
 \bar e_i =& {y_i} - {y_0} - {h_i} \notag \\
 = & {C_i}{x_i} - {C_0}{x_0} - {C_{hi}}{{\tilde h}_i} \notag \\
 = &{C_i}({x_i} - {X_i}{x_0} - {X_{hi}}{{\tilde h}_i}) \notag \\
 = &  {C_i}{e_i}.
\end{align}\par
It has been proved that for $t > {T_2}$, ${e_i} = 0$, thus $\bar e_i =0$ holds for $t > {T_2} $. That is, the formation error  $\bar e_i$ converges to the origin in the prescribed time  $T_2$.\par
The proof is thus completed.
\end{proof}

\begin{remark}
When $\left\| {B_i^T{P_i}{e_i}} \right\|=0$, the $u_{i,2}$ and $u_{i,3}$ can be replaced by $\textbf{0}$ and it's easy to prove that the Theorem 2 still holds. The detailed proof is omitted here due to the limited space.
\end{remark}

\section{Simulation}
Considering a group of MAS composed of five followers and one leader, the interaction network among agents is depicted in Fig.\;\ref{The interaction network among agents}.
\begin{figure}
  \centering
  \includegraphics[width= 80pt]{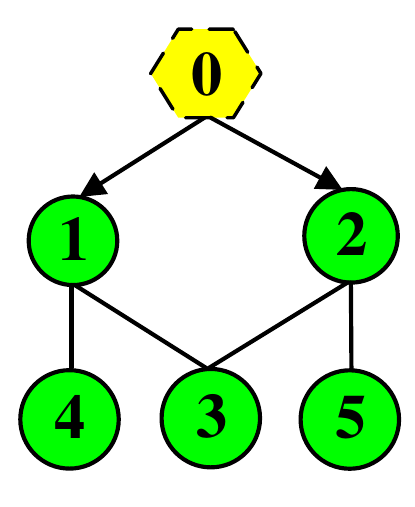}
  \caption{The interaction network among agents.}\label{The interaction network among agents}
\end{figure}
Consider the following dynamics of followers: ${A_1} = \left[ {\begin{array}{*{20}{l}}
0&1&0\\
0&2&1\\
6&0&0
\end{array}} \right]$, ${B_1} = \left[ {\begin{array}{*{20}{l}}
{0.1}&0&1\\
1&0&1\\
{ - 1}&1&0
\end{array}} \right]$,
${A_2} = \left[ {\begin{array}{*{20}{l}}
0&2&0\\
0&3&1\\
6&0&0
\end{array}} \right]$, ${B_2} = \left[ {\begin{array}{*{20}{l}}
{0.1}&0&1\\
2&0&0\\
{ - 1}&1&1
\end{array}} \right]$,
${A_3} = \left[ {\begin{array}{*{20}{l}}
0&3&0\\
0&0&1\\
6&0&0
\end{array}} \right]$, ${B_3} = \left[ {\begin{array}{*{20}{l}}
{-2}&0&-1\\
1&0.2&0\\
{ - 1}&0&1
\end{array}} \right]$,
${C_1} =C_2= C_3 = \left[ {\begin{array}{*{20}{l}}
{1}&0&0\\
1&2&0\\
\end{array}} \right]$,
${A_4} = \left[ {\begin{array}{*{20}{l}}
0&1\\
0&0\\
\end{array}} \right]$, ${B_4} = \left[ {\begin{array}{*{20}{l}}
{0.1}&0\\
-1&1
\end{array}} \right]$,
${A_5} = \left[ {\begin{array}{*{20}{l}}
0&2\\
0&0\\
\end{array}} \right]$, ${B_5} = \left[ {\begin{array}{*{20}{l}}
{1.1}&0\\
-1&1\\
\end{array}} \right]$.
${C_4} =C_5 = \left[ {\begin{array}{*{20}{l}}
{1}&0\\
0&1\\
\end{array}} \right]$,
The leader is modeled by
 ${A_0} = \left[ {\begin{array}{*{20}{l}}
0&0&1\\
-0.6&0&1\\
0&0&0
\end{array}} \right]$, ${B_0} = \left[ {\begin{array}{*{20}{l}}
{0}\\
0\\
1
\end{array}} \right]$,  and ${C_0} = \left[ {\begin{array}{*{20}{l}}
1&0&0\\
0&1&0\\
\end{array}} \right]$.\par
The uncertainties are set as\\
${\Delta A_1} = \left[ {\begin{array}{*{20}{l}}
0.005&0&0.1-0.005\sin(t/3)\\
0.05&0&0.1-0.05\sin(t/3)\\
-0.05&0.01&0.05\sin(t/3)-0.05
\end{array}} \right]$,\;
${\Delta B_1} = \left[ {\begin{array}{*{20}{l}}
{0.002}&0&0.01\\
0.02&0&0.01\\
{ - 0.02}&-0.01\sin(t/3)&0
\end{array}} \right]$,\\
${\Delta A_2} = \left[ {\begin{array}{*{20}{l}}
-0.005&1&0.001\\
-0.1&0&0.02\\
0.05&0.05&-0.09
\end{array}} \right]$, \;
${\Delta B_2} = \left[ {\begin{array}{*{20}{l}}
{-0.012}&0&0.01\\
-0.04&0&0\\
{ 0.02}&-0.01&0.01
\end{array}} \right]$,\\
${\Delta A_3} = \left[ {\begin{array}{*{20}{l}}
0&-0.1&-0.2\\
0&-0.01&0.09\\
0&0.1&-0.1
\end{array}} \right]$, \;
${\Delta B_3} = \left[ {\begin{array}{*{20}{l}}
{0.03}&0&-0.01\\
-0.008&-0.002\cos(t/2)&0\\
{ 0}&0&0.01
\end{array}} \right]$,\\
${\Delta A_4} = \left[ {\begin{array}{*{20}{l}}
-0.01&0.005\cos(t/2)\\
0.16&-0.05\cos(t/2)+0.1\\
\end{array}} \right]$, \\
${\Delta B_4} = \left[ {\begin{array}{*{20}{l}}
{-0.001}&0\\
-0.005&0\\
\end{array}} \right]$,\;
${\Delta A_5} = \left[ {\begin{array}{*{20}{l}}
0.11&-0.055\sin(t)\\
-0.15&-0.05\sin(t)+0.1\\
\end{array}} \right]$, \\
${\Delta B_5} = \left[ {\begin{array}{*{20}{l}}
{0.011\cos(t)}&0\\
-0.01\cos(t)+0.011&0\\
\end{array}} \right]$.\par
To generate the desired formation pattern, the exosystem (\ref{formation_system}) is set as\\
${ A_{hi}} = \left[ {\begin{array}{*{20}{l}}
0&1&0&0\\
-1&0&0&0\\
0&0&0&1\\
0&0&-1&0\\
\end{array}} \right]$,\;
${ C_{hi}} = \left[ {\begin{array}{*{20}{l}}
1&0&0&0\\
0&0&1&0
\end{array}} \right]$,   $i=1,...,5.$\\
Besides,
\begin{align}
 \tilde h_i(0) = 4[\cos((2(i-1)\pi/5),  -2\sin((2(i-1)\pi/5),  \notag \\ 2\sin((2(i-1)\pi/5), 2\cos((2(i-1)\pi/5)]]^T.
 \end{align}\par
 The external disturbances are set as  $d_1 =[\sin(t),\cos(t),\sin(t/2)]^T$, $d_2=[2\sin(t),\cos(t),0.5]^T$, $d_3=[\sin(t),2\cos(t),0]^T$, $d_4=[2\text{{exp}}(-3t), \sin(t)]^T$ and $d_5=[2\text{exp}(-3t), \cos(2t)]^T$.
 The input of the leader is $u_0 =\sin(t/2)$.\par
The parameters in the observers and the controllers are chosen as $\alpha_i = 5$, $\beta_i = 0.2$, $c_i = 10\, (i=1,...,5)$, $\rho_1 = 5 $, $\rho_2 = 5$, $\rho_3 = 0.2$, $\rho_4 = 0.03$, $T_1= 0.5$, $T_2= 4$. \par
Fig.\;\ref{Output snapshots of agents at different times} shows the outputs $y_i=[y_{i1}, y_{i2}] \in R^2$ of all agents. The formation errors $\bar e_i=[\bar e_{i1}, \bar e_{i2}] \in R^2 \;(i=1,...,N)$ are given in Fig.\;\ref{The formation errors}. It's obvious that all formation errors converge to the origin in the prescribed time $T_2$, which reflects the effectiveness of the proposed control approach.
\begin{figure}[]
 \centering
 \subfloat[Output snapshot at 1s.]
  {
  \includegraphics[width=120pt, height=100pt]{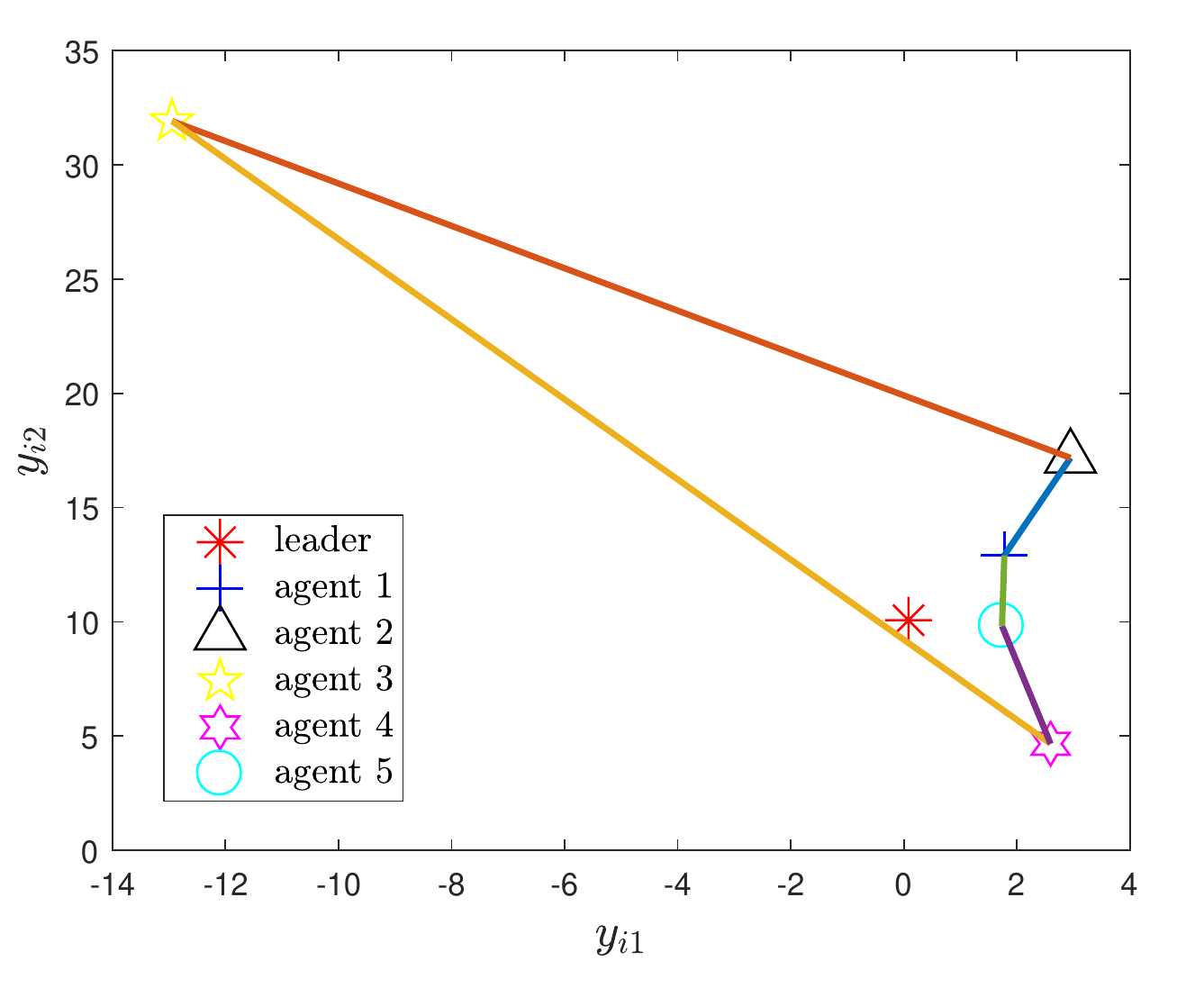}
  }
 \subfloat[Output snapshot at 2s.]
  {
  \includegraphics[width=120pt, height=100pt]{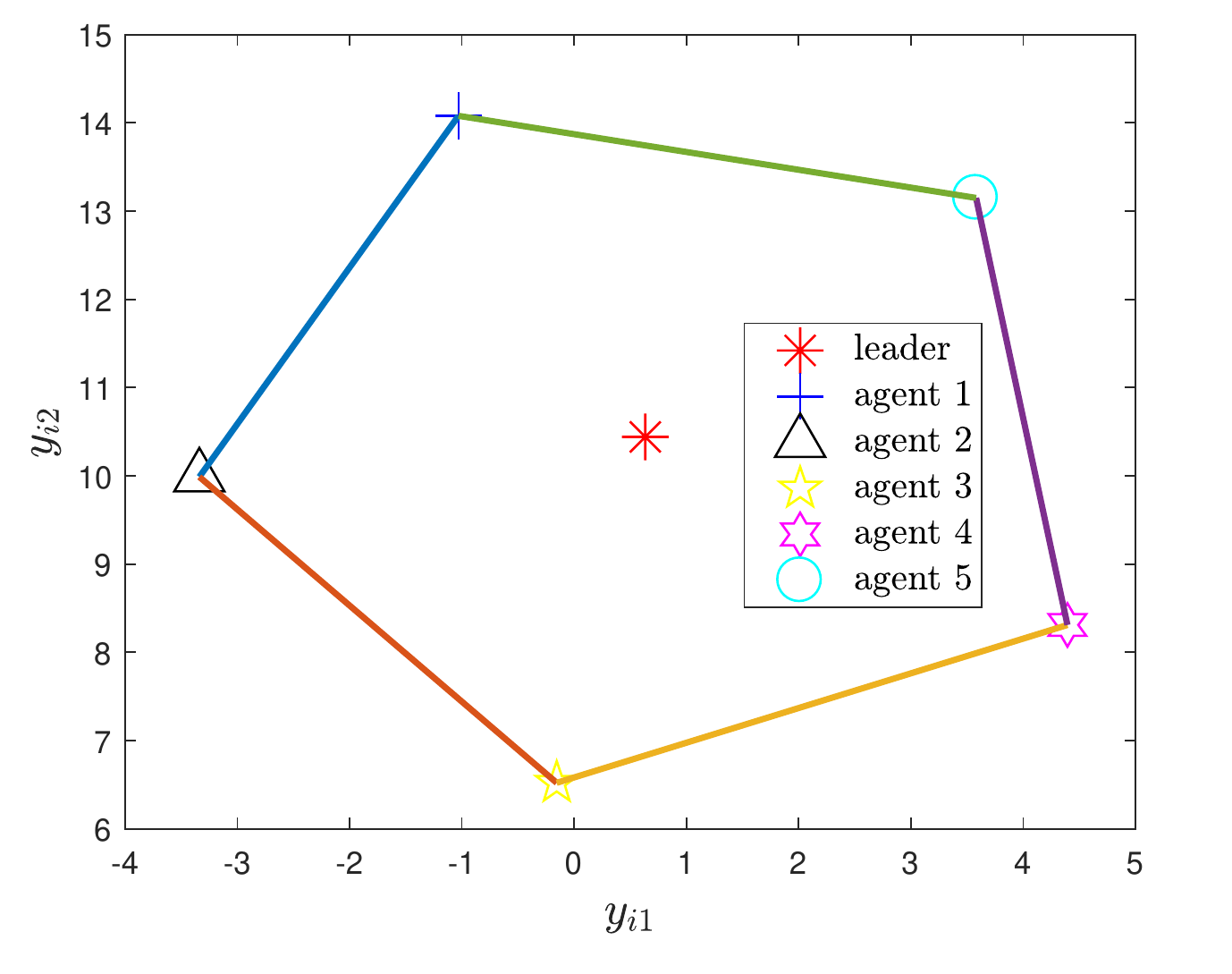}
  }\\
  \subfloat[Output snapshot at 4s.]
  {
  \includegraphics[width=120pt, height=100pt]{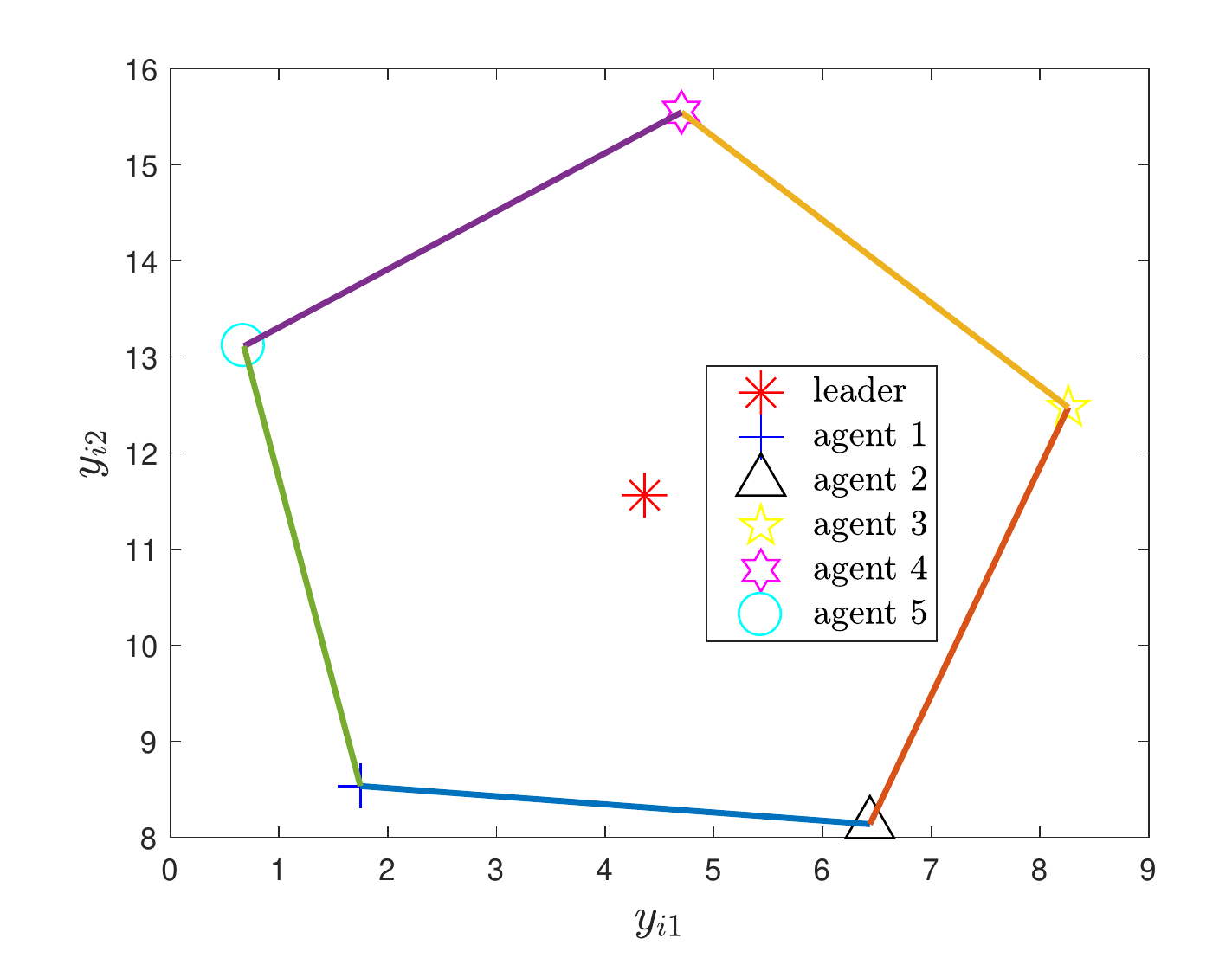}
  }
 \subfloat[Output snapshot at 10s.]
  {
  \includegraphics[width=120pt, height=100pt]{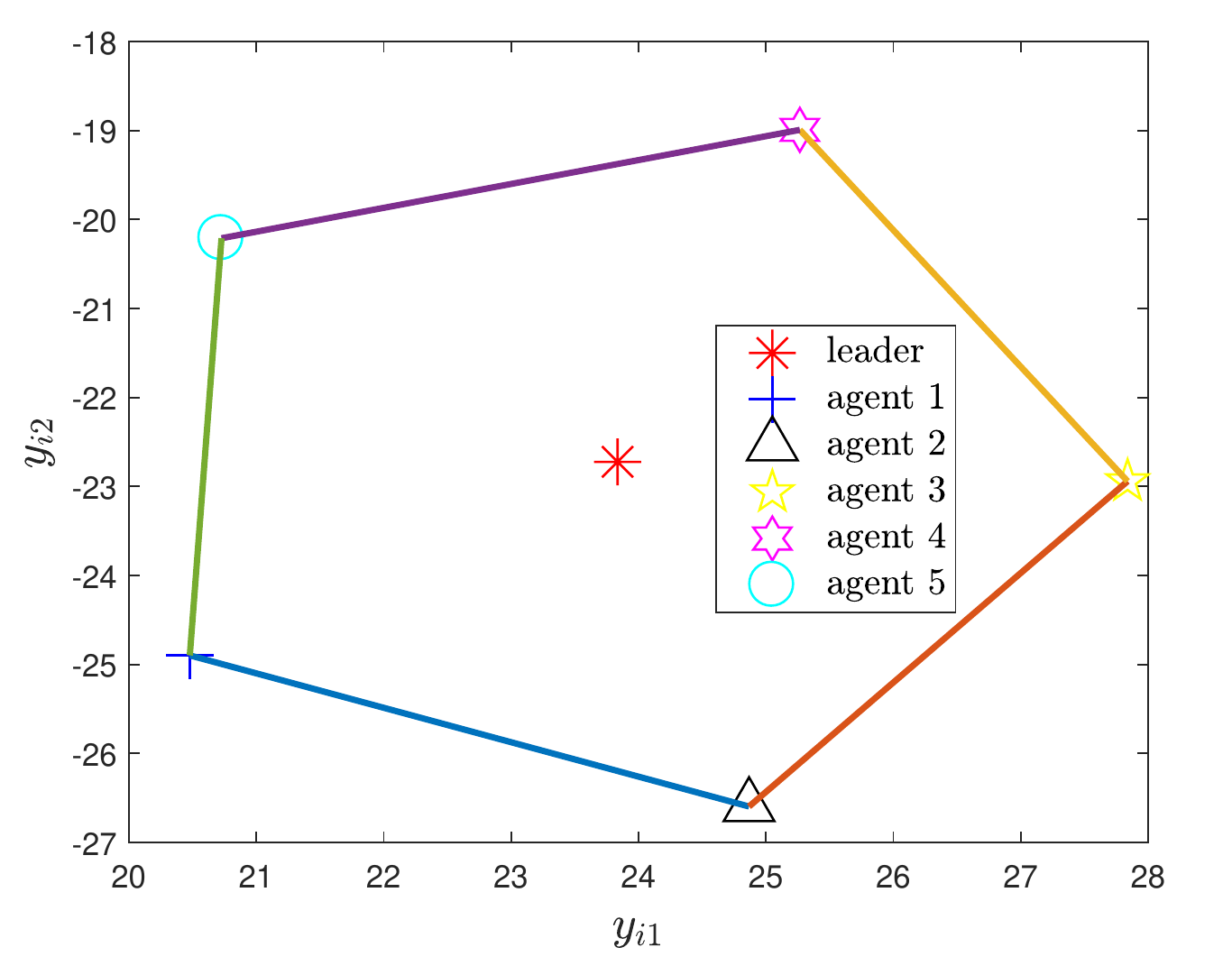}
  }
  \caption{Output snapshots of agents at different times.}\label{Output snapshots of agents at different times}
\end{figure}
\begin{figure}
  \centering
  \includegraphics[width=200pt]{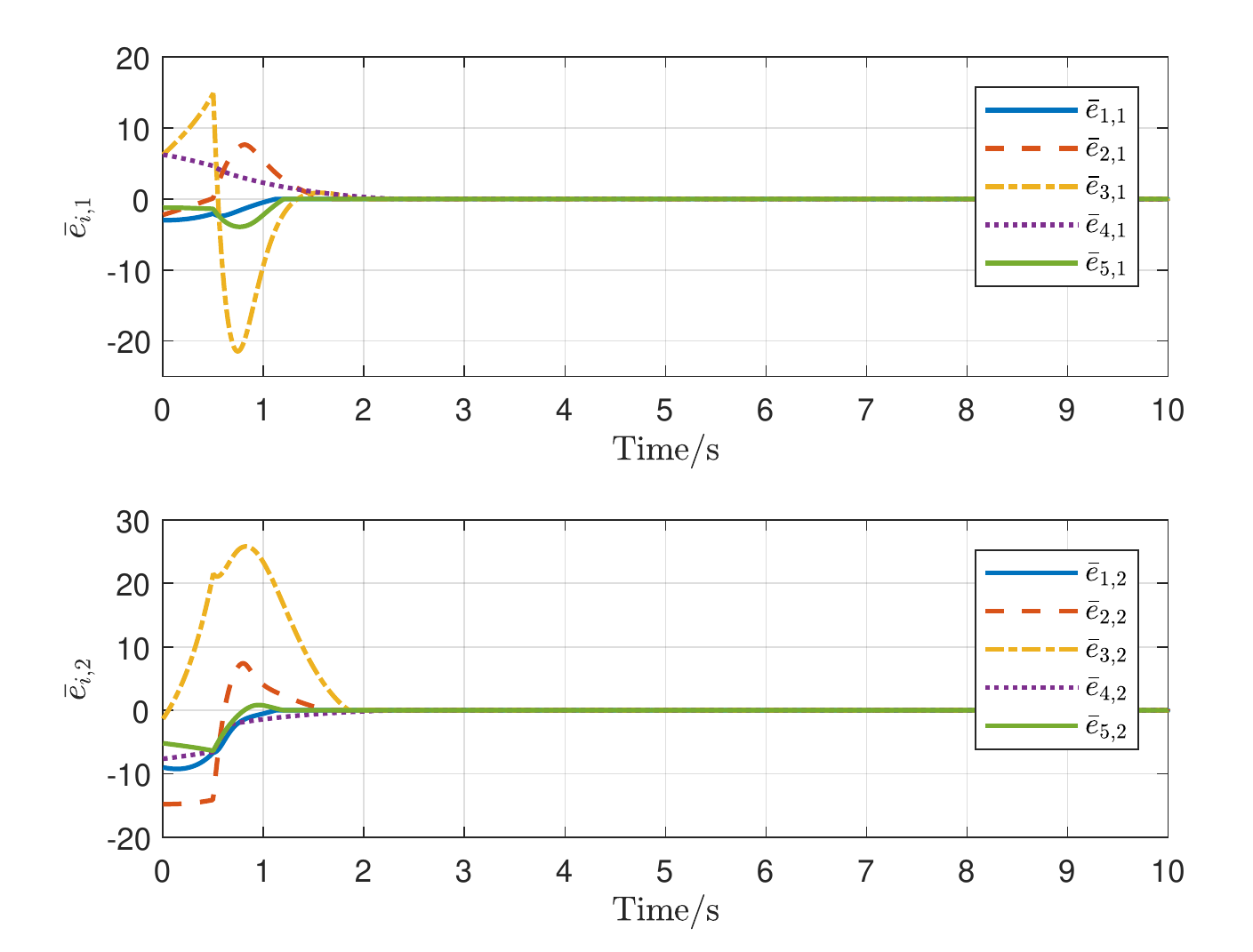}
  \caption{The formation errors $\bar e_{i,m}\;(i=1,...,5; m=1,2).$ }\label{The formation errors}
\end{figure}

\section{Conclusion}
The prescribed time time-varying output formation tracking of the heterogeneous MAS has been investigated in this paper. First, a distributed observer is proposed to estimate the states of the leader, which is able to provide an accurate estimation in the prescribed time. Then, the prescribed time formation controller is designed for each follower considering the disturbances and the uncertainties in both the state matrix and input matrix. Further, the effectiveness of the proposed control strategy is demonstrated by the theory analysis and simulation results.
\section*{Acknowledgements}
This work is supported by National Science Foundation of China under Grant 61873319.

\bibliographystyle{elsarticle-num}
\bibliography{Prescribed_Time_Time_varying_Output_Formation_Tracking_for_Uncertain_Heterogeneous_Multi_agent_systems}


\end{document}